\newtheorem{lemma}{Lemma}
\newtheorem{definition}{Definition}
\newtheorem{corollary}{Corollary}
\newtheorem{prob}{Problem}
\newtheorem{question}{Question}
\newtheorem{proposition}{Proposition}
\newcommand\DM{{\accentset{\infty}{g}}}
\newcommand\DA{{\bar g}}
\newcommand\D{{g}}
\newcommand\G{\mathcal G}
\newcommand\V{\mathcal V}
\newcommand\E{\mathcal E}
\newcommand\EE{\mathbb E}
\newcommand\RR{\mathbb R}
\newcommand\CC{\mathbb C}
\newcommand\HS{\mathcal H}
\newcommand\ESD{\sigma}
\newcommand\ket[1]{|#1\rangle}
\newcommand\bra[1]{\langle#1|}
\newcommand\opn[1]{\operatorname{#1}}
\newcommand\tr{\opn{tr}}
\newcommand\one{1\kern-.8ex\opn{I}}
\newcommand\rank{\opn{rank}}
\newcommand\Ecdf{\mathbb{F}}
\author{Nilin Abrahamsen}
\affil{\small{Department of Mathematics\\Massachusetts Institute of Technology\\Cambridge, MA, USA}}
\title{Short proof of a spectral Chernoff bound\\for local Hamiltonians}
\begin{document}

\maketitle

\begin{abstract}
	We give a simple proof of a Chernoff bound for the spectrum of a $k$-local Hamiltonian based on Weyl's inequalities. The complexity of estimating the spectrum's $\epsilon(n)$-th quantile up to constant relative error thus exhibits the following dichotomy: For $\epsilon(n)=d^{-n}$ the problem is NP-hard and maybe even QMA-hard, yet there exists constant $a>1$ such that the problem is trivial for $\epsilon(n)=a^{-n}$. We note that a related Chernoff bound due to Kuwahara and Saito (Ann. Phys. '20) for a generalized problem is also sufficient to establish such a dichotomy, its proof relying on a careful analysis of the \emph{cluster expansion}.  
\end{abstract}
 

\section{Introduction}

A fundamental problem in the intersection of quantum physics and computer science is that of computing the \emph{energy levels} of a system of $n$ interacting particles. These are the eigenvalues of the \emph{local Hamiltonian} $H$, a conjugate-symmetric (\emph{Hermitian}) linear operator acting on the tensor product $\HS\simeq(\CC^d)^{\otimes n}$. 
The locality property means that $H$ is a sum of terms $H_\eta\otimes I$ where $H_\eta$ is an operator on $k=O(1)$ tensor factors and $I$ is the identity on the remaining factors. The locality structure gives rise to a hypergraph $\G=(\V,\E)$ with $|\V|=n$ and with the $H_\eta$ indexed by $m$ hyperedges $\eta\in\E$. 
Standard diagonalization procedures to compute the energy levels would take exponential time due to the dimension of the tensor product space.

The most famous problem in this category focuses on computing the \emph{lowest} eigenvalue, the \emph{ground state energy}. This generalizes the problem MAX-CSP of computing the optimal value of a constraint satisfaction problem, but now the ``variable assignments'' are vectors with exponentially many parameters. Computing the lowest eigenvalue up to a certain inverse polynomial accuracy in known to be complete for QMA \cite{kempe_complexity_2006}, a quantum analogue of NP. 
A major open problem is the \emph{quantum PCP-conjecture} \cite{aharonov_guest_2013} which posits that it is QMA-hard to even \emph{approximate} the ground state energy of the Hamiltonian $H=\sum_{\eta\in\E}H_\eta$ up to constant \emph{relative} error $\gamma m$. Here, $\|H_\eta\|\le1$ for each of the $m$ interactions $\eta\in\E$, and $\gamma$ is a small constant.

A number of approximation algorithms for local Hamiltonians have been put forth \cite{anshu_beyond_2020,hallgren_approximation_2020,brandao_product-state_2016,bravyi_approximation_2019}. Successful approximation algorithms imply no-go theorems for the quantum PCP conjecture, imposing restrictions on the possible hard instances that would make the conjecture true. Indeed it suffices to place the approximation problem in NP which is thought to be strictly smaller than QMA.

A related classic question in physics asks about the \emph{distribution} or \emph{density} of energy levels. \cite{jensen_quantum_2020} recently proposed \emph{quantum} algorithms for this question, which can be phrased in terms of computing the number of eigenvalues in a given interval. The complexity of the spectral density for \emph{local} Hamiltonians was studied in \cite{brown_computational_2011}, where it was shown that computing the number of eigenvalues in an interval of inverse polynomial length is no harder than $\#$P, subject to an inverse-polynomial gap around the interval. \cite{harrow_classical_2020} gives classical algorithms to compute \emph{partition functions} of local Hamiltonians, which similarly characterizes the aggregate behavior of many eigenvalues.

Combining the ideas of approximation algorithms and spectral density estimation raises the question: \emph{Can we construct an efficient {approximation} algorithm for the spectral distribution of a local Hamiltonian?} The \emph{empirical spectral distribution} (ESD) of $H$ is the probability distribution $\ESD_H=\frac1{\dim\HS}\sum_{i}\delta_{\lambda_i}$ where $\delta_{\lambda_i}$ is the point probability measure at the $i$\textsuperscript{th} eigenvalue $\lambda_i$ (with multiplicity, in non-decreasing order). By \emph{approximation} we mean that we allow errors along the horizontal (eigenvalue) axis when viewing the distribution as a histogram. 

We compare with a result from high-dimensional statistics \cite{kong_spectrum_2017}: Given i.i.d. samples of a $D$-dimensional random vector $Y$, estimate the spectrum of $Y$'s \emph{covariance} matrix. \cite{kong_spectrum_2017} showed that the spectrum of the covariance matrix can be approximated using a number of samples \emph{sublinear} in the dimension, and hence with much fewer samples than would be needed to approximate the covariance matrix itself (in particular the sample covariance matrix is low-rank so most of its eigenvalues are $0$).
The quality of approximation in \cite{kong_spectrum_2017} is evaluated in terms of the \emph{earth-mover's} distance (also called Wasserstein-1 distance, written $W^1$), which allows but penalizes errors along the horizontal eigenvalue axis of the histogram.
 The spectrum estimation is achieved by estimating the low-degree \emph{moments} of the spectrum.

We now note that in the setting of local Hamiltonians we are also able to compute the constant-degree moments efficiently. For operators acting on a vector space $\HS$ introduce the \emph{normalized trace} $\bar\tr=\frac1{\dim\HS}\tr$. Consider the rescaled Hamiltonian $h=\frac1mH$ and its empirical spectral distrbution $\tilde\ESD_h$. The $r$\textsuperscript{th} moment of $h$'s spectrum can be written as:
\begin{equation}\label{eq:mnteq}\int t^r d\tilde\ESD_h(t)=\bar\tr(h^r)=\EE_{\eta_1}\cdots\EE_{\eta_r}\bar\tr(H_{\eta_1}\cdots H_{\eta_r}),\end{equation}

where the $\eta_i$ are sampled i.i.d. from the uniform distribution on interactions $\eta\in\E$.
As is convention we use $H_\eta$ as a shorthand for $H_\eta\otimes I$. Note that, unlike the standard trace, $\bar\tr$ is unchanged when tensoring with the identity. This follows easily from noticing that $\bar\tr(H)=\EE\bra\psi H\ket\psi$,\footnote{The row vector $\bra\psi$ is the dual, or conjugate transpose, of column vector $\ket\psi$.} where $\ket\psi$ is chosen uniformly at random from an orthonormal basis.

Since $H_{\eta_1}\cdots H_{\eta_r}$ acts on the set $\eta_1\cup\cdots\cup\eta_r$ of at most $rk$ qudits, each term $\bar\tr(H_{\eta_1}\cdots H_{\eta_r})$ can be computed in time $O(rd^{2.38rk})$.  
\eqref{eq:mnteq} immediately yields an algorithm to approximate the spectrum of $H$ up to small relative error $\gamma$ in time independent of $m$. Indeed, \cite{kong_spectrum_2017} proposition 1 implies that for a distribution of bounded support (the spectrum of $h$ in this case), knowing the first $r=\lfloor C\gamma^{-1}\rfloor$ moments gives an $\gamma$-approximation in $W^1$ distance. Moreover, it suffices to approximate each moment up to an error exponentially decreasing in $r$. So it suffices to sample $2^{O(r)}$ terms in \eqref{eq:mnteq} and compute each in time $d^{O(rk)}=d^{O(k/\gamma)}$ for a total time complexity of $d^{O(k/\gamma)}$. 

The questions remains: does the output of the above moment-based algorithm give us nontrivial information about the spectrum of $H$, or will it instead be an expression of a universal property of a local Hamiltonian's spectrum which could be known without running the algorithm? It turns out that the latter is the case, as shown by the following simple computation: Let $\mu_\eta=\bar\tr(H_\eta)$. Applying the $r=2$ case of \eqref{eq:mnteq} to the centered interaction terms $H_\eta-\mu_\eta$ we get
\begin{equation}\label{eq:var}\opn{Var}\tilde\ESD_h=\EE_\eta\EE_{\eta'}\bar\tr\big((H_{\eta}-\mu_{\eta})(H_{\eta'}-\mu_{\eta'})\big)\le 4\mathbb{P}(\eta\cap\eta'\neq\emptyset),\end{equation}
since the terms evaluate to $\bar\tr(H_{\eta}-\mu_{\eta})\bar\tr(H_{\eta'}-\mu_{\eta'})=0\cdot0$ when the interactions do not overlap. Assume for simplicity that every vertex is involved in $\D$ interactions $\eta$ and every interaction involves $k$ qudits. Then any $\eta$ overlaps with at most $k\D$ other hyperedges in the interaction hypergraph, fixing $\eta$ we have $\mathbb{P}(\eta'\cap\eta\neq\emptyset|\eta)\le k\D/m$. Writing $mk=\D n$, \eqref{eq:var} implies,
\begin{equation}\label{eq:vartwo}\opn{Var}\tilde\ESD_h\le 4k^2/n.\end{equation}
Note that \eqref{eq:vartwo} does not depend on the vertex degree $\D$.  
It follows that the point measure at $\tilde\mu=\bar\tr(h)$ approximates $\tilde\ESD_h$ up to error $4k^2/n$ in $L^2$-distance, and $2k/\sqrt n$ in earth-mover's distance, by Cauchy-Schwartz or Jensen's inequality (The $W^1$ distance coincides with the $L^1$-distance since one distribution is a point). This makes the output of the aforementioned moment-based algorithm trivial for low-degree moments. Indeed, it would require degree $r\gg\sqrt n$ moments and time complexity $e^{\omega(\sqrt n)}$ to improve on the trivial estimate $\delta_{\tilde\mu}$ of $\tilde\ESD_h$.

We will not use the rescaled operator $h$ in the remainder of the paper.

\subsection{Spectrum estimation in terms of quantiles}
We relate the problem of spectrum estimation to the topic of ground states: Generalize the problem of approximating the ground state energy to that of approximating the $\epsilon(n)$\textsuperscript{th} quantile of $H$'s spectrum up to constant relative error $\gamma$.  
\begin{prob}
	\label{prob:approxproblem}
	Given a $k$-local Hamiltonian $H=\sum_{\eta\in\E}H_\eta$ with $\|H_\eta\|\le1$ and $m=|\E|$ encoded as the list $(H_\eta)_{\eta\in\E}$, output $\hat\lambda$ such that $|\hat\lambda-\lambda_i(\bar H)|\le \gamma m$ where $i=\lfloor\epsilon(n)d^n\rfloor$.
\end{prob}

We ask how the complexity of problem \ref{prob:approxproblem} depends on $\epsilon(n)$. By symmetry we may restrict attention to $\epsilon(n)\le1/2$. For $\epsilon(n)=d^{-n}$ problem \ref{prob:approxproblem} is conjectured to be QMA-hard according to the quantum PCP-conjecture.

On the other hand \eqref{eq:var} already implies the weak concentration inequality (Chebyshev's inequality) $\ESD_H([\mu-\gamma m,\mu+\gamma m]^\complement)\le \gamma^{-2}k^2/n$ where $\mu=\bar\tr(H)$ and $\complement$ denotes the complement, so for $\epsilon(n)\ge\gamma^{-2}k^2/n$ we may simply output $\mu$. 

\begin{question}\label{q:thequestion}{Can this concentration bound be strengthened to be exponentially decreasing in $n$, showing that problem \ref{prob:approxproblem} is easy for some exponentially decreasing $\epsilon(n)$}? 
\end{question}
It turns out that the answer is yes.
The technical contribution of this paper is a simple proof of such a Chernoff-type bound with exponentially decreasing tails. Our proof is based of Weyl's eigenvalue inequalities. Note that a similar bound was previously known from a careful analysis of \emph{cluster expansions} by Kuwahara and Saito (\cite{kuwahara_gaussian_2020} corollary 2). In fact their result is for the more general energy distribution $\ESD^\rho$ of certain states $\rho$ relative to $H$. Other previous works \cite{anshu_concentration_2016,kuwahara_connecting_2016} give bounds for $\ESD^\rho$ using a moment-based approach, but their bounds are restricted to a \emph{short-range interacting} setting and are therefore less similar to ours (See section \ref{comparison}).
A different spectral concentration inequality by \cite{montanaro_applications_2012} does not include the dependence on system size which is of central interest here.

\section{Statement of the Chernoff bound}

We briefly recall out notation. $\G=(\V,\E)$ is a hypergraph with $|\V|=n$ vertices and $|\E|=m$ hyperedges, each incident to $k$ vertices. $\V$ indexes the set of qudits, each isomorphic to $\CC^d$.
A local Hamiltonian is the sum $H=\sum_{\eta\in\E}H_\eta$ where each local interaction $H_\eta$ acts on the qudits in $\eta$ and $\|H_\eta\|\le1$. Let $\lambda_1( H)\le\lambda_2( H)\le\ldots$ be the ordered eigenvalues of $ H$.
The empirical spectral distribution (ESD) of Hamiltonian $H$ is the probability measure $\ESD_H=d^{-n}\sum_i\delta_{\lambda_i}(H)$ which assigns mass $d^{-n}$ to each of its $d^n$ eigenvalues counted with multiplicity. 

The \emph{vertex degree} $\D_v=|\{\eta\in\E|v\in\eta\}|$ of qudit $v\in\V$ is the number of interactions involving qudit $v$. Let $\DM=\max_{v\in\V}\D_v$ be the maximum degree over all qudits and let $\DA=\frac1n\sum_{v\in\V}\D_v$ be the average degree. $\DA$ and $\DM$ may be unbounded.
	
\begin{proposition}\label{prop:theprop}
	Let $H$ be a local Hamiltonian on a $k$-uniform hypergraph with maximum degree $\DM$ and average degree $\DA$. Let $\ESD$ be the ESD of $H$. Then,
	\begin{equation}\label{eq:yesfloor}\ESD\big([-m,{\mu}-\gamma m]\big)\le k\DM \exp\Big(-\frac{\gamma^2}2\Big\lfloor\frac{n}{k^2(\DM/\DA)}\Big\rfloor\Big),\end{equation}
	where ${\mu}=d^{-n}\tr H$. 	
	The same bound holds for $\ESD\big([{\mu}+\gamma m,m]\big)$.
\end{proposition}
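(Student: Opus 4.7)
My plan is to reduce the bound to a classical Hoeffding inequality by decomposing $H$ along an edge coloring of $\G$, and then reassemble the pieces using Weyl's eigenvalue inequality at the level of the spectral CDFs. Since each hyperedge meets at most $k(\DM-1)$ others, greedy coloring of the line graph produces a partition $\E = \E_1 \sqcup \cdots \sqcup \E_L$ with $L \le k\DM$, where each $\E_c$ is a matching (the edges in $\E_c$ are pairwise vertex-disjoint). Writing $H^{(c)} = \sum_{\eta \in \E_c} H_\eta$ and $m_c = |\E_c|$, we have $\sum_c m_c = m$ and $mk = n\DA$, and within a single color class the terms $H_\eta$ act on pairwise disjoint qudit subsystems; hence they commute and are simultaneously diagonalizable.

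By the tensor-identity invariance of $\bar\tr$ noted in the introduction, a uniformly chosen eigenvalue of $H^{(c)}$ is distributed as a sum $\sum_{j=1}^{m_c} X_j$ of independent bounded random variables $X_j \in [-1,1]$ with means $\bar\tr(H_{\eta_j})$. Hoeffding's inequality then yields
$$\ESD_{H^{(c)}}((-\infty,\mu^{(c)} - t_c]) \le \exp(-t_c^2/(2m_c)),$$
where $\mu^{(c)} = \bar\tr(H^{(c)})$ and $\sum_c \mu^{(c)} = \mu$. To combine these tails across color classes, I will use Weyl's inequality $\lambda_{i+j-1}(A+B) \ge \lambda_i(A) + \lambda_j(B)$ (in nondecreasing order), which by a short counting argument upgrades to the subadditive bound $F_{A+B}(a+b) \le F_A(a) + F_B(b)$ on the eigenvalue-counting CDFs $F_X(x) = \ESD_X((-\infty,x])$. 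Iterating across all $L$ classes gives
$$F_H\bigl(\mu - {\textstyle\sum_c} t_c\bigr) \le \sum_{c=1}^L \exp(-t_c^2/(2m_c)).$$

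To finish, I will optimize over the $t_c$ subject to $\sum_c t_c = \gamma m$: the choice $t_c \propto \sqrt{m_c}$ equalizes all exponents, and Cauchy-Schwarz $(\sum_c \sqrt{m_c})^2 \le L\sum_c m_c = Lm$ then yields the uniform exponent $\gamma^2 m/(2L)$. Hence $F_H(\mu - \gamma m) \le L\exp(-\gamma^2 m/(2L))$, and substituting $L \le k\DM$ and $m = n\DA/k$ produces the stated bound, with the floor in the exponent reflecting the integrality of $m_c$ (a balanced coloring ensures $m_c \ge \lfloor m/L \rfloor$). The symmetric upper-tail bound follows by applying the same argument to $-H$. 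The only delicate ingredient is the CDF form of Weyl's inequality, and this is a quick contrapositive: if $\lambda_i(A) > a$ and $\lambda_j(B) > b$, then $\lambda_{i+j-1}(A+B) > a + b$, so the number of $(A+B)$-eigenvalues at most $a+b$ does not exceed the sum of the corresponding counts for $A$ and $B$.
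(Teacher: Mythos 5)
Your proof is correct, and the high-level architecture (decompose $H$ along a proper hyperedge coloring, apply Hoeffding within each matching, recombine via the subadditive CDF form of Weyl's inequality) is the same as the paper's. The genuine difference is how you extract the exponent. The paper fixes the thresholds at $t_c = \gamma m_c$, so each Hoeffding term has exponent $m_c\gamma^2/2$, and then must lower-bound every $m_c$ uniformly; this forces the coloring to be \emph{equitable}, which the paper obtains by invoking the Hajnal--Szemer\'edi theorem. You instead keep the $t_c$ free subject to $\sum_c t_c = \gamma m$, choose $t_c \propto \sqrt{m_c}$ to equalize all the exponents, and apply Cauchy--Schwarz $(\sum_c \sqrt{m_c})^2 \le L\sum_c m_c = Lm$. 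This makes a plain greedy coloring with $L \le k(\DM-1)+1$ colors sufficient, bypassing Hajnal--Szemer\'edi entirely, and the resulting bound $k\DM\exp(-\gamma^2 m/(2k\DM))$ is slightly sharper than the stated one since it incurs no floor. (Consequently, your closing parenthetical about ``the floor reflecting the integrality of $m_c$'' is not actually needed in your argument and could be dropped.) Everything else --- the independence of the per-edge eigenvalues within a matching, the monotonicity of $L\mapsto Le^{-C/L}$ used to replace $L$ by its upper bound, the substitution $m = n\DA/k$, and the contrapositive derivation of the CDF form of Weyl --- checks out.
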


We thus obtain an exponential concentration bound for arbitrarily large vertex degrees, as long as the average and maximum degree are of the same order. {The multiplicative prefactor $\DM\lesssim\DA=km/n\le \frac kn\binom{n}{k}=n^{O(1)}$ can be absorbed}. 
\begin{corollary}\label{cor:thecor}
Let $H$ be a local Hamiltonian on a $k$-uniform and $\D$-regular hypergraph. Then,
\[\ESD\big([-m,{\mu}-\gamma m]\big)\le k\D \exp\Big(-\frac{\gamma^2}2\lfloor{n}/{k^2}\rfloor\Big),\]
where $\mu=d^{-n}\tr H$. The same bound holds for $\ESD\big([{\mu}+\gamma m,m]\big)$.
\end{corollary}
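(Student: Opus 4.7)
The plan is to reduce the spectral concentration problem to a classical Hoeffding bound via two ingredients: an edge coloring of the interaction hypergraph that exposes an ``independent sum'' structure on the spectrum, and Weyl's inequalities to translate a classical tail bound into a spectral one.

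First I would partition the hyperedge set $\E$ into $\chi$ matchings $M_1,\ldots,M_\chi$, where a matching is a set of pairwise vertex-disjoint hyperedges. Since any hyperedge $\eta$ shares a vertex with at most $k(\DM-1)$ other hyperedges, the line graph of $\G$ has maximum degree strictly less than $k\DM$, and a greedy vertex coloring produces such a partition with $\chi\le k\DM$ and $\sum_i|M_i|=m$. Setting $H^{(i)}=\sum_{\eta\in M_i}H_\eta$, the summands in $M_i$ act on disjoint qudits, so they commute and share a product eigenbasis; consequently $\ESD_{H^{(i)}}$ equals the convolution of the ESDs of the individual $H_\eta$, i.e.\ it is the law of $\sum_{\eta\in M_i}X_\eta$ for independent $X_\eta\sim\ESD_{H_\eta}$ supported on $[-1,1]$. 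Writing $\mu^{(i)}=d^{-n}\tr H^{(i)}$, Hoeffding's inequality then gives
\[
\ESD_{H^{(i)}}\bigl((-\infty,\mu^{(i)}-t_i]\bigr)\le\exp\bigl(-t_i^2/(2|M_i|)\bigr).
\]

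Next I would combine these bounds via Weyl's inequality $\lambda_i(B)+\lambda_j(C)\le\lambda_{i+j-1}(B+C)$. A short counting argument (take $i=a+1$, $j=b+1$, where $a$, $b$ are the numbers of eigenvalues of $B$, $C$ at most $s_1$, $s_2$ respectively) upgrades this to the CDF-level sub-additivity
\[
\ESD_{B+C}\bigl((-\infty,s_1+s_2]\bigr)\le\ESD_B\bigl((-\infty,s_1]\bigr)+\ESD_C\bigl((-\infty,s_2]\bigr).
\]
Iterating across the $\chi$ summands of $H=\sum_i H^{(i)}$ with $s_i=\mu^{(i)}-t_i$ and $\sum_i t_i=\gamma m$ then yields $\ESD_H([-m,\mu-\gamma m])\le\sum_i\exp(-t_i^2/(2|M_i|))$.

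Finally I would optimize over $(t_i)$. Choosing $t_i=c\sqrt{|M_i|}$ equalizes the exponents, and the constraint $\sum_i t_i=\gamma m$ together with the Cauchy-Schwarz bound $\sum_i\sqrt{|M_i|}\le\sqrt{\chi m}$ forces $c\ge\gamma\sqrt{m/\chi}$, so the total is at most $\chi\exp(-\gamma^2 m/(2\chi))$. This expression is monotone increasing in $\chi$, so substituting $\chi\le k\DM$ and using $m/(k\DM)=n\DA/(k^2\DM)=n/(k^2\DM/\DA)$ yields the stated bound once one replaces the exponent by its floor (which only weakens the inequality). The upper-tail statement follows by the same argument applied to $-H$, and the corollary is the specialization $\DM=\DA=\D$. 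The main subtlety I anticipate is the Cauchy-Schwarz step: it is precisely this averaging that removes the individual matching sizes $|M_i|$ from the exponent, leaving a dependence on $\chi$ alone—the very quantity controlled by the aggregate ratio $\DM/\DA$.
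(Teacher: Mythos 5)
Your proof is correct and reaches the same bound, but via a meaningfully different combinatorial route that is worth noting. The paper first invokes the Hajnal--Szemer\'edi theorem to obtain an \emph{equitable} edge coloring with $r=k\DM-k+1$ colors, so that each color class has size at least $\lfloor m/r\rfloor$, and then applies Hoeffding with the same threshold $\gamma m_c$ to every class. You instead take an arbitrary (greedy) proper coloring of the line graph with $\chi\le k\DM$ colors, accept whatever sizes $|M_i|$ come out, and recover the uniform exponent a posteriori by optimizing the thresholds $t_i\propto\sqrt{|M_i|}$ and applying Cauchy--Schwarz to $\sum_i\sqrt{|M_i|}$. The two devices---equitable coloring vs.\ greedy coloring plus threshold optimization---play exactly the same role and give essentially the same constant; yours is arguably more elementary, since Hajnal--Szemer\'edi is a nontrivial theorem while greedy $(\Delta+1)$-coloring is immediate, and your version actually produces the slightly sharper bound $k\DM\exp(-\gamma^2 m/(2k\DM))$ without the floor (the floor in the paper is an artifact of the equitable-coloring guarantee). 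The Weyl step (CDF-subadditivity from $\lambda_i(B)+\lambda_j(C)\le\lambda_{i+j-1}(B+C)$), the Hoeffding step (product eigenbasis over a matching gives i.i.d.\ $[-1,1]$-valued summands), and the final monotonicity-in-$\chi$ substitution all match the paper's Lemmas~\ref{lem:Weyl} and~\ref{lem:combined} and the ensuing arithmetic.
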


Proposition \ref{prop:theprop} answers question \ref{q:thequestion} of the previous section about the complexity of problem \ref{prob:approxproblem}, assuming that the maximum degree and average degree are of the same order. It implies that for any $\gamma=\Omega(1)$ there exists a constant $a>1$ such that problem \ref{prob:approxproblem} is trivial for $\epsilon(n)\ge a^{-n}$. The approximation is simply $\hat\lambda:=\sum_{\eta\in\E}\bar\tr(H_\eta)$.

\subsubsection{The importance of unbounded degree}
For the case $k=2$, \cite{brandao_product-state_2016} show that approximating the energy of $\D$-regular \emph{graphs} of high degree $\D$ is in NP. Moreover, \cite{bravyi_quantum_2008} constructed gadgets to reduce $k$-local interactions to 2-local interactions. These two facts may at first appear to imply that high-degree hypergraphs do not make for hard instances for the approximate ground state problem. It turns out (\cite{harrow_personal_nodate}) that this argument is not valid. Consider for example an input with $n$ \emph{qubits} and $m$ interactions, each $3$-local. The gadgets of \cite{bravyi_quantum_2008} produce a $2$-local Hamiltonian on a graph $\widetilde{\mathcal G}$ with $\tilde n=n+m$ vertices and $\tilde m\le 6m$ edges, with $m$ \emph{mediator} qubits added. Now the numbers of vertices and edges are of the same order $\tilde n=\Omega(\tilde m)$, hence the averaging argument of \cite{brandao_product-state_2016} does not imply a bound $o(1)$. 

As an illustration (\cite{harrow_personal_nodate}), applying \cite{brandao_product-state_2016} theorem 9 (the non-regular case) yields a relative error of order $(\|A\|_{F}^2\|\pi\|_2^2)^{1/8}$ where $\pi$ is a probability distribution on the enlarged vertex set and $\|A\|_F^2$ is the \emph{harmonic mean\footnote{The notation is interpreted as follows: $\|A\|_F^2$ is the squared Frobenius/Hilbert-Schmidt norm of $A$, the adjacency matrix of $\widetilde\G$ rescaled to have column sums equal to $1$.} of the degrees} in $\widetilde{\mathcal G}$, so the contribution from the degree-$3$ mediator qubits yields $\|A\|_F\ge\frac13 m$. Since $\|\pi\|_2^2\ge1/\tilde n=\Theta(1/m)$, the relative error bound $(\|A\|_{F}^2\|\pi\|_2^2)^{1/8}$ does not converge to $0$. 

In conclusion it is not known that approximating eigenvalues for high-degree $k$-local Hamiltonians is in NP, so hypergraphs with high vertex degree $\D\gg1$ are an important setting to study for approximation theory and the quantum PCP conjecture.

\subsection{Comparison with bounds in the literature}\label{comparison}
The analysis of the spectrum of $H$ is a special case of a problem studied in the literature seeking the distribution of an observable $H$ \emph{in a state $\rho$} \cite{anshu_concentration_2016,kuwahara_connecting_2016,kuwahara_gaussian_2020}. The state $\rho$ is subject to certain assumptions (say, product structure \cite{anshu_concentration_2016} or being a Gibbs state for a local Hamiltonian \cite{kuwahara_gaussian_2020}), and the distribution in question is $\ESD^{\rho}=\sum_{\lambda\in\opn{spec} H}\tr(\rho\Pi_\lambda)\delta_\lambda$, where $\Pi_\lambda$ are the spectral projections of the Hamiltonian $H$ and $\delta_\lambda$ the point probability measure at $\lambda$.\footnote{Here, the sum is over the spectrum of $H$ as a \emph{set}. Multiplicity of eigenvalues is included through the rank of $\Pi_\lambda$.} We call $\ESD^\rho$ the \emph{directional} energy distribution in direction $\rho$. This specializes to the spectral distribution of $H$ when $\rho$ is maximally mixed.

Consider a $k$-uniform, $\D$-regular interaction hypergraph but let us allow unbounded degree $\D\to\infty$. 
\cite{kuwahara_connecting_2016}, corollary 8 bounds $\ESD^\rho\big([-m,{\mu}-x]\big)\leq e^{-\tilde\Omega(x^2/n)}$ where the implicit constant depends on $k$ and, notably, on $\D$. Thus we must take $\D=O(1)$ which imposes that $m=\Theta(n)$. Substituting $x=\gamma m$ yields $\ESD^\rho\big([-m,\mu-\gamma m]\big)\leq e^{-\tilde\Omega(n\gamma^2)}$ as in proposition \ref{prop:theprop} for $k,\D=\Theta(1)$, but one does not get a bound when $\D\to\infty$.  
\cite{anshu_concentration_2016} theorem 1.2 gives a bound with an explicit dependence on $\D$,
\begin{equation}\label{eq:anshu16}\ESD^\rho\big([-m,{\mu}-\gamma m]\big)=e^{-\Omega(\frac{m\gamma^2}{k^2\D^2})}=\exp\Big({-\Omega\big(\frac{n\gamma^2}{k^3\D}\big)}\Big).\end{equation}
\eqref{eq:anshu16} does not obtain the exponential decay in $n$ as in proposition \ref{prop:theprop} unless $\D=O(1)$. For example, for the case of the complete graph the bounds of \cite{kuwahara_connecting_2016} and \cite{anshu_concentration_2016} do not show any concentration, whereas proposition \ref{prop:theprop} decreases exponentially with $n$. Physically, these limitations correspond to saying that the results of \cite{anshu_concentration_2016,kuwahara_connecting_2016} are for \emph{short-range} interacting systems.

The bound which is most similar to ours is found in \cite{kuwahara_gaussian_2020} and uses a delicate analysis of the \emph{cluster expansions} to obtain a bound on the energy distribution in \emph{long-range} interacting systems. \cite{kuwahara_gaussian_2020} corollary 2 states the bound for the spectral distribution (the same setting as ours):
\begin{equation}\label{eq:ks19}\ESD\big([-m,{\mu}-\underbrace{\gamma m}_x]\big)\le\exp\Big(-\frac{x^2}{(16e^3\D k)\D n}\Big)=\exp\Big(-\frac1{16e^3}\frac{\gamma^2n}{k^3}\Big),\end{equation}
where we have substituted $x=\gamma m=\gamma\cdot\D n/k$.
In corollary \ref{cor:thecor} we obtain an exponent of order $\frac12(\gamma/k)^2n$, improving over \eqref{eq:ks19} by a factor $160k=\Theta(k)$ in the exponent for our problem setting (since $160=\lfloor8e^3\rfloor$). 
For local Hamiltonians one has $k=O(1)$ so our result shrinks the \emph{base} of the exponential decay by a factor $e^{160k}\ge e^{320}$.
We stress that \cite{kuwahara_gaussian_2020} is able to analyze the more general \emph{directional} energy distribution $\ESD^\rho$. Furthermore the ``degree'' of a vertex $v$ is defined in a more flexible way in \cite{kuwahara_gaussian_2020} as a bound on $\sum_{\eta:v\in\eta}\|H_\eta\|$, and fewer-particle interactions $|\eta|< k$ are allowed. 

\subsubsection{Interpretation as typical directional energy distribution}
Estimating the directional energy distribution $\ESD^{\rho}$ is a more general problem than estimating the spectral distribution $\ESD$ of $H$. On the other hand, propositon \ref{prop:theprop} implies a partial converse (a similar connection was observed in \cite{montanaro_applications_2012}): Given any orthonormal basis $\{\ket{\psi_i}\}$ we have that the energy distribution in direction $\ket{\psi_i}$ satisfies an exponential concentration bound in all but an exponentially small proportion of the directions $\ket{\psi_i}$. Indeed, the spectral density bounded in proposition \ref{prop:theprop} can be written as $\ESD=\EE\ESD^{\ket{\psi_i}} $ where the expectation is over a uniformly chosen member of the basis. Letting $\epsilon$ be twice the RHS of proposition \ref{eq:yesfloor} and writing $\complement=\{x:|x-\mu|\ge \gamma m\}$ we get by Markov's inequality:
\[\sqrt\epsilon\cdot\mathbb P(\ESD^{\ket{\psi_i}}(\complement)>\sqrt\epsilon)\le\EE \ESD^{\ket{\psi_i}}(\complement)=\ESD(\complement)\le\epsilon\quad\Rightarrow\quad\mathbb P(\ESD^{\ket{\psi_i}}(\complement)>\delta)\le\sqrt\epsilon,\]
where $\sqrt{\epsilon}=Ce^{-\Omega(nt^2/k^2)}$. So for any $0<\gamma<1$ there exists a large set $S\subset\{1,\ldots,\dim\HS\}$ indexing basis vectors such that $|S|/\dim\HS\ge1-Ce^{-\Omega(n\gamma^2/k^2)}$, and such that the energy distribution concentrates,
\[\ESD^{\ket{\psi_i}}([\mu-\gamma m,\mu+\gamma m])\ge1-Ce^{-\Omega(n\gamma^2/k^2)},\]
in all directions $\ket{\psi_i}$, $i\in S$.

\section{Simple proof of spectral concentration}

We now turn to the proof of proposition \ref{prop:theprop}.  The idea of our proof is to use Weyl's eigenvalue inequalities \cite{weyl_asymptotische_1912} to combine multiple independent sets of interactions. A similar partitioning of interactions into independent sets has been done previously \cite{kuwahara_connecting_2016,anshu_concentration_2016} (see \cite{kuwahara_connecting_2016} lemma 2), but in these cases the sets were combined in a more elaborate way by analyzing the \emph{moments}, and the results do not yield our desired bounds in the long-range interacting case. 
\begin{definition}
	Given a Hermitian operator $H$ and $t\in\RR$, let $F(t)$ be the proportion of $H$'s eigenvalues in $(\infty,t]$. We call $F$ the ESD-CDF (cumulative distribution function for the ESD) of $H$.
\end{definition}

\begin{lemma}[Weyl's inequalities]\label{lem:Weyl}
	Let $H=\sum_{c=1}^{r} H_c$.
	For each $c$ let $F_c$ be the ESD-CDF of $H_c$, and let $F(t)$ be the ESD-CDF of $H=\sum_cH_c$. Then for any $t_1,\ldots,t_r\in\mathbb R$
	\[F(t_1+\ldots+t_r)\le F_1(t_1)+\cdots+F_r(t_r).\]
\end{lemma}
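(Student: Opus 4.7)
\textbf{Proof plan for Lemma \ref{lem:Weyl}.} I would prove this directly from the Courant--Fischer min-max characterization together with a dimension-counting argument on eigenspaces, rather than iterating the classical two-summand Weyl inequality. Writing $N = \dim\HS$, my goal is to show that the number $N \cdot F(t_1 + \cdots + t_r)$ of eigenvalues of $H$ lying in $(-\infty, t_1+\cdots+t_r]$ is at most $\sum_c N \cdot F_c(t_c)$.

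The first step is to exhibit, for each $c$, the $F_c(t_c) \cdot N$-dimensional subspace $V_c \subseteq \HS$ spanned by the eigenvectors of $H_c$ whose eigenvalues are $\le t_c$. By construction, every nonzero vector $v \in V_c^\perp$ is a combination of eigenvectors of $H_c$ with eigenvalues strictly greater than $t_c$, so $\langle v, H_c v\rangle > t_c \|v\|^2$. The second step is to intersect: set $W = \bigcap_{c=1}^r V_c^\perp$. The standard subspace-intersection bound $\dim(A \cap B) \ge \dim A + \dim B - N$, applied inductively, yields
\[
\dim W \;\ge\; \sum_{c=1}^r \dim V_c^\perp \;-\; (r-1)N \;=\; N - \sum_{c=1}^r N\cdot F_c(t_c).
\]
If the right-hand side is nonpositive the lemma is trivial, so I may assume $W \ne \{0\}$.

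For any unit $v \in W$, summing the $r$ strict inequalities gives
\[
\langle v, H v\rangle \;=\; \sum_{c=1}^r \langle v, H_c v\rangle \;>\; t_1 + \cdots + t_r.
\]
Setting $k = N - \dim W$, the subspace $W$ has dimension $N - k$, so by the min-max characterization
\[
\lambda_{k+1}(H) \;=\; \max_{\substack{W' \subseteq \HS \\ \dim W' = N-k}} \; \min_{\substack{v \in W' \\ \|v\|=1}} \langle v, H v\rangle \;\ge\; \min_{\substack{v \in W \\ \|v\|=1}} \langle v, H v\rangle \;>\; t_1 + \cdots + t_r.
\]
Consequently at most $k$ eigenvalues of $H$ lie in $(-\infty, t_1+\cdots+t_r]$, i.e.\ $N \cdot F(t_1+\cdots+t_r) \le k \le \sum_c N \cdot F_c(t_c)$, and dividing by $N$ gives the claim.

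There is no real obstacle here; the only place to be careful is maintaining strict inequality throughout so that the min-max step actually forces $\lambda_{k+1}(H) > t_1 + \cdots + t_r$ (and hence a \emph{strict} upper bound on the count of eigenvalues $\le t_1+\cdots+t_r$). The non-strict version of the bound $\langle v, H_c v\rangle \ge t_c$ for $v \in V_c^\perp$ would only give $\lambda_{k+1}(H) \ge \sum_c t_c$, which is not enough. Defining $V_c$ to include eigenvectors with eigenvalue $\le t_c$ (matching the $\le$ convention of the CDF) ensures $V_c^\perp$ is spanned by eigenvectors with eigenvalues strictly above $t_c$, which is exactly what is needed.
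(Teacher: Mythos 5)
Your proposal is correct and takes essentially the same approach as the paper. Both arguments identify the spectral subspaces $V_c$ (the paper writes them as the ranges of the projection-valued CDFs $\Ecdf_c(t_c)$), intersect their orthogonal complements, use dimension counting, observe the strict inequality $\langle v, H_c v\rangle > t_c$ on that intersection, and invoke Courant--Fischer; the only cosmetic difference is that you apply the max-min form of Courant--Fischer to the one witness subspace $W$, while the paper applies the min-max form over all subspaces of dimension $D+1$.
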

\begin{proof}
	Let $E_c$ be the projection-valued spectral measure for $H_c$ and consider the projection-valued CDF $\Ecdf_c(t)=E((-\infty,t])$. Then $F_c(t)=\frac1{\dim\HS}\rank \Ecdf_c(t)$.  
	Consider any subspace $W$ of dimension $D+1$ where $D=\sum_c\rank\Ecdf_c(t_c)$. $W$ contains a unit vector $\ket\psi$ in $\bigcap_c\ker\Ecdf_c(t_c)$. Then $\bra\psi H_c\ket\psi>t_c$ for each $c$, so $\bra\psi H\ket\psi> \sum_c t_c$. So by the Courant-Fischer min-max theorem, $\lambda_{D+1}(H)>\sum_c t_c$. That is, at most $D=(\dim\HS)\sum_cF_c(t)$ eigenvalues are in $(-\infty,\sum_c t_c]$.
\end{proof}

	A (hyper)edge coloring of $\G=(\V,\E)$ is a partition $\E=\E_1\sqcup\ldots\sqcup\E_r$ such that for any color $c$ it holds that any two distinct $\eta_1,\eta_1\in\E_c$ are disjoint as subsets of $\V$.

\begin{lemma}\label{lem:combined}
	Let $\E=\E_1\sqcup\ldots\sqcup\E_r$ be a hyperedge coloring and let $m_c=|\E_c|$ be the number of hyperedges with the color $c$. Let ${\mu}=\bar\tr(H)$. Then the ESD-CDF $F$ of $H$ satisfies
\begin{equation}\label{eq:colsum}F({\mu}-\gamma m)\le\sum_{c=1}^r\exp(-m_c\gamma^2/2).\end{equation}
\end{lemma}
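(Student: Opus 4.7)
The plan is to combine Weyl's inequality (Lemma \ref{lem:Weyl}) with a Chernoff-type bound applied color by color. The key observation is that within a single color class $\E_c$, the interactions act on pairwise disjoint sets of qudits, so the operators $\{H_\eta\otimes I\}_{\eta\in\E_c}$ commute and can be simultaneously diagonalized.

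First I would analyze the ESD of the single-color Hamiltonian $H_c=\sum_{\eta\in\E_c}H_\eta$. Because the $H_\eta$ in color $c$ act on disjoint tensor factors, the eigenvalues of $H_c$ are precisely the sums $\sum_{\eta\in\E_c}\lambda^{(\eta)}$ where each $\lambda^{(\eta)}$ ranges over the spectrum of $H_\eta$, and the ESD $\ESD_{H_c}$ is exactly the distribution of $\sum_{\eta\in\E_c}X_\eta$ where $X_\eta$ are independent random variables, each distributed according to the ESD of $H_\eta$ (the identity factors on qudits outside $\bigcup_{\eta\in\E_c}\eta$ just multiply every eigenvalue's multiplicity by the same constant, which does not alter the normalized distribution).

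Next I would apply Hoeffding's inequality. Since $\|H_\eta\|\le1$ we have $X_\eta\in[-1,1]$ almost surely, and $\EE X_\eta=\bar\tr(H_\eta)$, so $\EE\sum_{\eta\in\E_c}X_\eta=\bar\tr(H_c)=:\mu_c$. Hoeffding gives
\[F_c(\mu_c-\gamma m_c)=\mathbb P\Bigl(\sum_{\eta\in\E_c}X_\eta-\mu_c\le-\gamma m_c\Bigr)\le \exp(-\gamma^2 m_c/2).\]
Finally, invoke Lemma \ref{lem:Weyl} with $t_c:=\mu_c-\gamma m_c$. Since $\bar\tr$ is additive and (by the footnote in the introduction) unchanged by tensoring with identity, $\sum_c\mu_c=\bar\tr(H)=\mu$, and clearly $\sum_c m_c=m$, so $\sum_c t_c=\mu-\gamma m$. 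Then
\[F(\mu-\gamma m)=F\Bigl(\sum_c t_c\Bigr)\le \sum_{c=1}^r F_c(t_c)\le \sum_{c=1}^r \exp(-m_c\gamma^2/2),\]
which is exactly \eqref{eq:colsum}.

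The only nontrivial conceptual step is identifying the ESD of $H_c$ with a sum of independent bounded random variables; once that is in place, Hoeffding and Weyl slot together directly. I do not expect any serious obstacle beyond keeping track of the normalization of $\bar\tr$ under tensoring with identity and confirming that the range $[-1,1]$ of each $X_\eta$ gives the clean constant $1/2$ in the Hoeffding exponent.
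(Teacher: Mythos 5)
Your proof is correct and follows essentially the same route as the paper: within each color class the interactions act on disjoint tensor factors, so the ESD of $H^{(c)}$ is the law of a sum of $m_c$ independent $[-1,1]$-valued random variables, Hoeffding gives the per-color bound, and Weyl's inequality (Lemma~\ref{lem:Weyl}) combines the colors. The only cosmetic difference is that the paper phrases the independence via a random choice of tensor-product eigenvector, whereas you phrase it via the eigenvalue distributions directly; these are the same observation.
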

\begin{proof}
	Write $H^{(c)}=\sum_{\eta\in\E_c}H_\eta$ and $\mu_c=\bar\tr(H^{(c)})$. By lemma \ref{lem:Weyl} we have,
	\begin{equation}\textstyle{F({\mu}-\gamma m)=F\Big(\sum_{c=1}^r(\mu_c-\gamma m_c)\Big)\le \sum_{c=1}^r F_c({\mu}_c-\gamma m_c).\label{eq:applyweyl}}\end{equation}
	We fix $c$ and bound $F_c({\mu}_c-\gamma m_c)$: For each $\eta\in\E_c$ choose independently a uniformly random $\ket{\psi_\eta}$ from an eigenbasis for $H_\eta$, so that $\ket{\psi}=\bigotimes_{\eta\in\E_c}\ket{\psi_\eta}$ is uniform from an eigenbasis for $H^{(c)}$. The corresponding random eigenvalue $\lambda$ is distributed according to the ESD of $H^{(c)}$. $\lambda$ is a sum of $m_c$ random variables in the interval $[-1,1]$, so by Hoeffding's bound (\cite{hoeffding_probability_1963} inequality (2.3)),
	\[F_c({\mu}_c-\gamma m_c)=P\Big(\lambda\le\EE\lambda-m_c\gamma\Big)=\exp(-m_c\gamma^2/2).\]
\end{proof}
To finish our proof of proposition \ref{prop:theprop} it remains to determine the number of colors $r$ and the sizes $m_c$ of the independent sets in \eqref{eq:colsum}.

\begin{lemma}\label{lem:coloring}
	There exists an equitable coloring $\E=\E_1\sqcup\cdots\sqcup\E_r$ with $r=k\DM-k+1$ colors. Here equitable means that $|\E_c|\ge\lfloor m/r\rfloor$ for each $c=1,\ldots,r$.
\end{lemma}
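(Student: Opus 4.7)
The plan is to recast the hyperedge coloring as a vertex coloring of the line graph $L(\G)$, whose vertex set is $\E$ and in which two hyperedges are adjacent iff they share a qudit. A hyperedge coloring of $\G$ in the sense defined earlier (two hyperedges of the same color are disjoint in $\V$) is exactly a proper vertex coloring of $L(\G)$, and the equitability requirement translates to the color classes of $L(\G)$ all having size at least $\lfloor m/r \rfloor$.

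Next I would bound the maximum degree of $L(\G)$. Fix a hyperedge $\eta \in \E$. For each of its $k$ vertices $v \in \eta$, the number of \emph{other} hyperedges containing $v$ is $\D_v - 1 \le \DM - 1$. Union-bounding over the $k$ vertices of $\eta$, the number of neighbors of $\eta$ in $L(\G)$ is at most $k(\DM - 1) = k\DM - k$. Hence $\Delta(L(\G)) \le k\DM - k = r - 1$.

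Now I invoke the Hajnal--Szemerédi theorem, which states that any finite graph $G$ with maximum degree $\Delta$ admits an equitable proper vertex coloring with $\Delta + 1$ colors, i.e., a proper coloring in which every color class has size $\lfloor |V(G)|/(\Delta+1)\rfloor$ or $\lceil |V(G)|/(\Delta+1)\rceil$. Applied to $L(\G)$ with the degree bound above and $r = (k\DM - k) + 1$ colors, this yields a partition $\E = \E_1 \sqcup \cdots \sqcup \E_r$ with $|\E_c| \ge \lfloor m/r \rfloor$ for each $c$, which is exactly the claim.

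The only substantive step is the appeal to Hajnal--Szemerédi; the rest is bookkeeping about the line graph. If one wishes to avoid citing that theorem, a slightly weaker ``almost equitable'' version suffices for the target bound in proposition \ref{prop:theprop} (since Hoeffding is applied separately to each color class and we only need $|\E_c|$ to be at least proportional to $m/r$). Such a weaker version can be obtained by a short greedy argument on $L(\G)$, but the cleanest statement is the one above.
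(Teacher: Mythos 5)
Your proof is correct and follows essentially the same route as the paper: the graph you call the line graph $L(\G)$ is exactly the paper's auxiliary graph $G^*$, the degree bound $k(\DM-1)$ is obtained the same way, and the key step is the same appeal to the Hajnal--Szemer\'edi theorem with $r = k(\DM-1)+1$ colors. Your remark that a weaker greedy ``almost equitable'' coloring would suffice for the downstream application is a reasonable observation, but the argument as given matches the paper's.
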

\begin{proof}
	Construct the graph $G^*$ on vertex set $\E$ where two interactions $\eta\sim\eta'$ are connected iff some qudit is acted on by both $\eta$ and $\eta'$.
$G^*$ has degree at most $k(\DM-1)$. By the Hajnal-Szemeredi theorem \cite{kierstead_short_2008} there exists an equitable vertex coloring of $G^*$ with $r=k(\DM-1)+1$ colors. 
\end{proof}

\begin{proof}[Proof of proposition \ref{prop:theprop}]
	We use lemma \ref{lem:coloring} to pick an equitable coloring with $r\le k\DM$ colors. Since the coloring is equitable we have $m_c\ge\lfloor m/r\rfloor$. Apply lemma \ref{lem:combined} and note that each term in \eqref{eq:colsum} is bounded by $\exp(-\lfloor m/r\rfloor \gamma^2/2)$. Thus,
	\[F({\mu}-\gamma m)\le r\exp(-\lfloor m/r\rfloor\gamma^2/2 )\le k\DM\exp\Big(-\frac{\gamma^2}2\Big\lfloor\frac{m}{k\DM}\Big\rfloor\Big),\]
	where we have used that the middle expression is increasing in $r$. The result follows by noting that $n\DA =\sum_v\D _v=mk$ and substituting $m=(\DA /k)n$
\end{proof}

\section{Open problems}
We conclude with a few questions raised by this work.

\begin{itemize}[label={--}]
	\item Can the bound of proposition \ref{prop:theprop} be matched for the directional energy distribution $\ESD^\rho$ considered in \cite{kuwahara_gaussian_2020}?
	
	\item
	We obtain an exponential concentration bound when the maximum degree and the average degree in the interaction hypergraph are of the same order, possibly unbounded. Can this regularity-like condition be weakened, either by allowing a few vertices of atypically large degree, or by defining the degree using operator norms as in \cite{kuwahara_gaussian_2020}? 

\item In the introduction we sketched a na\"ive moment-based algorithm for approximating the spectrum of a local Hamiltonian and noted that because of spectral concentration its output would be trivial when using low-degree moments. Could this algorithm be improved using combinatorial insights, say, by using the cluster expansions? Alternatively, running the na\"ive algorithm up to $\sqrt n\ll M \ll n$ moments estimates the spectrum to greater precision than the trivial point estimate in sub-exponential time; would this give non-trivial information about the spectrum? Or does one intead find that the spectral distribution is always close to a Gaussian, as in the case of spin chains \cite{keating_spectra_2015}? 
\end{itemize}

\section{Acknowledgements}
The author thanks Aram Harrow for helpful comments.

\bibliographystyle{alpha}
\bibliography{bib.bib}
\end{document}